\newtheorem{theorem}{Theorem}
\newtheorem{lemma}{Lemma}
\newtheorem{definition}{Definition}
\newtheorem{conjecture}{Conjecture}
\title{Efficient Enumeration of Maximal Cliques in Weakly Closed Graphs}
\author{George Manoussakis \textsuperscript{1},\\
\textsuperscript{1} LI-PARAD, Universit\'e de Versailles Saint-Quentin-En-Yvelines, Paris Saclay}
\date{}
\begin{document}

\maketitle

\begin{abstract}

We demonstrate that the algorithm presented in [J. Fox, T. Roughgarden, C. Seshadhri, F. Wei, and N. Wein. Finding cliques in social networks:  A new distribution-free model. SIAM journal on computing, 49(2):448–464, 2020.] can be modified to achieve an enumeration time complexity of $\alpha\mathcal{O} (poly(c)n)$. Here, parameter $c$ represents the weakly closure of the graph, $\alpha$ denotes the number of maximal cliques, and $n$ refers to the graph's order. This result improves over their complexity, which is exponential in the closure of the graph. 

\end{abstract}

\section{Introduction}

Maximal clique enumeration finds many applications in various domains.  In social networks, finding  densely connected clusters of users can be used to predict how a virus spreads. Another application is targeted advertising: entities in a cluster have higher probability of responding positively to similar adds. The problem is, by definition, difficult since the number of maximal cliques can be exponential. Moreover, social networks graphs often have a significant number of nodes and edges. Thus, it seems relevant to find suitable parameters of the graph to achieve tractable complexity. There has been a long line of work following this approach and such results can be found in \cite{30,27,28,29,24,26,G1,25}, for instance. The parameters used in these papers often aim to capture the sparseness of the input graph, since many real-word graphs have few edges. This is especially true for social networks graphs. Well-known parameters include the maximum degree, arboricity or degeneracy~\cite{4}. A sparse graph has generally bounded degeneracy. In this context, Fox and al.~\cite{fox2020} introduced a parameter that captures the triadic closure principle in social networks. According to this principle, if two users have a significant number of mutual friends, it is highly likely that they are friends themselves. They call this parameter the closure number of the graph and they formally define it as follows:

\begin{definition}
For a positive integer $c$, an indirected graph $G=(V,E)$ is $c$-closed if, whenever two distinct vertices $u,v \in V$ have at least $c$ common neighbors, $(u,v)$ is an edge of $G$.
\end{definition}

As this definition is very restrictive, since a single pair of vertices may prevent the graph from being $c$-closed, it has been relaxed as follows :

\begin{definition}

Given a graph and a value of $c$, a bad pair is a non-adjacent pair of vertices with at least $c$ common neighbors. A graph is weakly $c$-closed if there exists an ordering of the vertices $\{v_1, v_2, \dots , v_n\}$ such that for all $i$, $v_i$
is in no bad pairs in the graph induced by $\{v_i, v_{i+1}, \dots , v_n\}$. 

\end{definition}

From a practical point of view, the weakly closure seems to be an interesting parameter as many real words social graphs have a small weakly closure~\cite{fox2020}. It tries and capture the triadic property principle discussed before which empirically usually holds~\cite{blo}. Since the closure and the weakly closure are computable in polynomial time~\cite{fox2020}, it would be interesting to develop fixed-tractable algorithms with respect to the closure of the input graph.

\subsection{Our contributions}

Fox and al.~\cite{fox2020} demonstrated that a $c$-closed graph has $\mathcal{O}(3^{c/3}n^2)$ maximal cliques. Building upon their proof, they devised an algorithm enumerating all the maximal cliques in these graphs in time $\mathcal{O}(3^{c/3}2^cn^2)$. This algorithm includes an initial preprocessing phase, which computes induced 2-paths known as wedges, and runs in polynomial time. Furthermore they established that these results naturally extend to weakly $c$-closed graphs. These findings have also been generalised to other clique-like structures by Koana and al.~\cite{koana}.

We will adopt the approach of~\cite{fox2020}
and present enhancements to their algorithm that lead to improved time complexities. Our first main result is the development of an algorithm that achieves a time complexity of $\mathcal{O}(poly(c)3^{c/3}n^2)$. This improvement removes the exponential factor of $2^c$ from the original complexity, resulting in a more efficient algorithm. This is stated in Theorem~\ref{algexp}. Furthermore, leveraging the same approach, we demonstrate that it is possible to develop an output-sensitive algorithm. Its complexity depends on the size of the solution. Specifically, our algorithm achieves an enumeration time complexity of $\alpha \mathcal{O}(npoly(c))$, with an additional polynomial time preprocessing phase. This is stated in Theorem~\ref{algosens}.

\begin{figure}[h]
   
\begin{algorithm}[H]
    \begin{algorithmic}[1]

\Procedure{\text{Preprocess}}{$G$}    
  \State Enumerate all wedges in $G$
   \State Let $M$ be a mapping such that $\forall x \in V(G)$, $M[x]$ is the set
    of wedges with $x$ as an endpoint
  
 \EndProcedure
\Procedure{\text{CClosedClique}}{$G$} 
  \If{$|V(G)|=1$}
        \State return $G$ \label{last}
    \EndIf
    \State Fix an arbitrary vertex $v\in V(G)$ \label{Vertex}
     \State Initialise $F$ an empty forest
    \State F $\leftarrow$ CClosedCliques($G \backslash \{v\}$) \label{recs}
    
    \State Given a leaf $l$ in $F$, let $K(l)$ be the set of vertices from the root to $l$
    \For {each leaf $l$ in $F$ with $K(l)\subseteq N(v)$}\label{dfs} 
    \State add $v$ to $F$ as a child of $l$ \label{child}
    \EndFor
    \State $E'\leftarrow$ the set of all edges in some wedge in $M[v]$ except for those edges incident to $v$
    \State $H \leftarrow G(V,E')$
    \State Initialise $F_u$ an empty forest
    \For{each vertex $u\in (G\backslash (N(v)\cup \{v\}))$}{\label{for}}
    
    \State $F_u \leftarrow CLIQUES(G[N_{H}(u)])$\label{tree1} \label{Enu}
    \State add $v$ to $F_u$ with an edge between $v$ and every current root in $F_u$\label{tree2}
    \EndFor
   
\State \Return {$F\cup (\cup_{u} F_u)$}

\EndProcedure

\end{algorithmic} 
\caption{:}
\label{alg1}
  \end{algorithm}
    
\caption{The algorithm of Fox and al.\cite{fox2020}. In this version, only a superset of the maximal cliques is outputted. Here wedges denote the induced paths of length two of the graph.}
\end{figure}

\section{Results}
\label{res}

The graphs we consider are of the form $G=(V,E)$ with $V$ the vertex set and $E$ the edge set. If $X\subset V$, the subgraph
of $G$ induced by $X$ is denoted by $G[X]$.  When not clear from the context, the vertex set of $G$ will be denoted by $V(G)$. The set $N(x)$ is called the \textit{open neighborhood} of the vertex $x$ and consists of the vertex adjacent to $x$ in $G$. \textit{The closed neighborhood} of $x$, denoted by $N[x]$,  is defined as the set $ \{N(x) \cup{x}\}$. Given an ordering $v_{1},...,v_{n}$ of the vertices of $G$, set $V_{i}$ is the vertices following $v_{i}$ including itself in this ordering, that is, the set $\{v_{i},v_{i+1},...,v_{n}\}$.
 Assume that vertices are considered recursively at Line~\ref{Vertex}  of Algorithm~\ref{alg1} in some ordering  $\sigma= v_1,v_2,\dots,v_i,\dots, v_n $. We define set $A_i=(V_i \cap (G\backslash (N(v_i)\cup \{v_i\})))$. That is, set $A_i$ is the set of vertices coming after $v_i$ in the ordering $v_1,v_2,\dots,v_n$ and which are not in its closed neighborhood. This set is crucial when computing the maximal cliques of the graph.
 
 \subsection{Description of the Algorithm of Fox and al.~\cite{fox2020}.}
In the original paper~\cite{fox2020}, two versions of the algorithm are presented.
 The first version, referred to as Algorithm~\ref{alg1}, provides only a superset of the solution. We give a complete description, for the sake of completeness. The algorithm is recursive and based essentially on the proof of the following theorem, as stated in the paper by Fox and al~\cite{fox2020}. Let $F(n,c)$ denote the maximum
possible number of maximal cliques in a $c$-closed graph on $n$ vertices.

\begin{theorem}
\label{fox}
For all positive integers $c$, $n$, we have $F(n, c) \leq 3^{c-1/3}n^2$.

\end{theorem}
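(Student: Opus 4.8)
The plan is to bound $F(n,c)$ by a recursion on $n$ obtained by deleting a single vertex, arranging things so that the only maximal cliques that genuinely ``cost'' anything at each step are the ones controlled by the $c$-closed condition. Fix any vertex $v$ of a $c$-closed graph $G$ on $n$ vertices, and note that $G\setminus\{v\}$ is again $c$-closed, since deleting a vertex can only shrink common neighborhoods; hence $G\setminus\{v\}$ has at most $F(n-1,c)$ maximal cliques. I would then sort the maximal cliques of $G$ into three classes: (B) those not containing $v$; (i) those of the form $S\cup\{v\}$ where $S$ is already a maximal clique of $G\setminus\{v\}$; and (ii) those of the form $S\cup\{v\}$ where $S$ is \emph{not} maximal in $G\setminus\{v\}$, i.e.\ some $u\notin N[v]$ is adjacent to every vertex of $S$.

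First I would check that classes (B) and (i) inject into the maximal cliques of $G\setminus\{v\}$: a clique of type (B) is literally a maximal clique of $G\setminus\{v\}$, while a clique $S\cup\{v\}$ of type (i) maps to the maximal clique $S$ of $G\setminus\{v\}$. These images are disjoint because a type-(B) clique is maximal in $G$ whereas an $S$ of type (i) is not (it extends by $v$). Writing $M(G)$ for the set of maximal cliques of $G$, this gives $|M(G)|\le F(n-1,c)+(\text{number of type-(ii) cliques})$.

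The heart of the argument, and the step I expect to be the main obstacle, is bounding the type-(ii) cliques, which is exactly where $c$-closedness is used. For such a clique $S\cup\{v\}$, pick a witness $u\notin N[v]$ adjacent to all of $S$; then $S\subseteq N(u)\cap N(v)$, and since $u$ and $v$ are non-adjacent the $c$-closed hypothesis forces $|N(u)\cap N(v)|\le c-1$. One then checks that $S$ remains a maximal clique of the induced subgraph $G[N(u)\cap N(v)]$ (any vertex of $N(u)\cap N(v)$ adjacent to all of $S$ would already contradict maximality of $S$ in $G[N(v)]$), a graph on at most $c-1$ vertices, so by the Moon--Moser bound it has at most $3^{(c-1)/3}$ maximal cliques. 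Charging each type-(ii) clique to a single witness $u$ and summing over the at most $n-1$ possible witnesses yields at most $(n-1)\,3^{(c-1)/3}$ cliques of type (ii); the delicate points are verifying the in-subgraph maximality and ensuring each clique is counted once.

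Combining the estimates gives the recursion $F(n,c)\le F(n-1,c)+(n-1)\,3^{(c-1)/3}$. Unrolling from the base case $F(1,c)=1$ produces $F(n,c)\le 1+3^{(c-1)/3}\sum_{k=1}^{n-1}k=1+3^{(c-1)/3}\tfrac{n(n-1)}{2}$, which is at most $3^{(c-1)/3}n^2$ for every $c\ge 1$; and since $(c-1)/3\le c-\tfrac13$, this is in turn at most the claimed bound $3^{\,c-1/3}n^2$.
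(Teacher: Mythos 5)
Your proof is correct and takes essentially the same route as the paper (that is, the proof of Fox et al.\ which the paper sketches): the same three-way classification of maximal cliques relative to a deleted vertex $v$, with the type-(ii) cliques confined to $G[N(u)\cap N(v)]$ for a non-neighbor witness $u$, bounded via $c$-closure and the Moon--Moser bound, yielding the recursion $F(n,c)\le F(n-1,c)+(n-1)\,3^{(c-1)/3}$. In fact you establish the sharper bound $3^{(c-1)/3}n^2$, which is presumably what the exponent written as $3^{c-1/3}$ in the paper's statement is intended to mean.
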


The algorithm is derived from the constructive proof of this theorem which relies on the following step. Let $G$ be a $c$-closed graph on $n$ vertices and let $v \in V (G)$ be an arbitrary vertex. Every maximal clique $K \subseteq G$ is of one of the following types:
\begin{enumerate}
    \item The clique $K$ does not contain vertex $v$; and $K$ is maximal in $G \backslash \{v\}$.
    \item  The clique K contains vertex $v$; and $K \backslash \{v\}$ is maximal in $G \backslash \{v\}$.
    \item The clique $K$ contains vertex $v$; and $K \backslash \{v\}$ is not maximal in $G \backslash \{v\}$.
\end{enumerate}

\begin{figure}[h]
\includegraphics[width=0.8\textwidth]{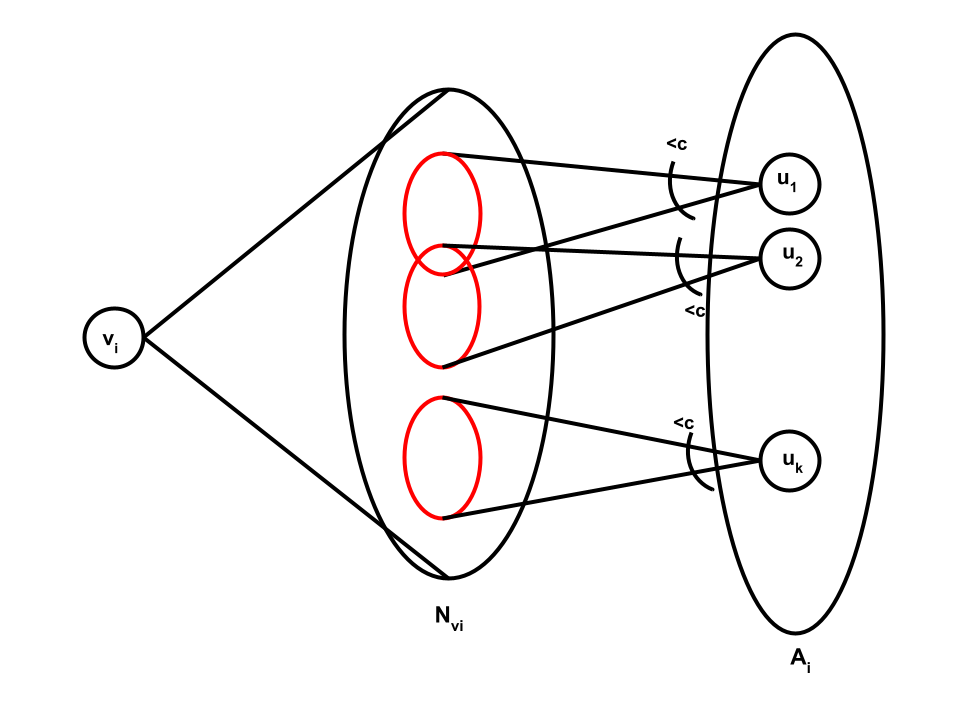}
\caption{A schematic description of the sets considered in a recursive step. Vertex $v_i$ is the current vertex in the recursion. Set $N_{v_i}$ represents its neighbors with higher rank in the closure ordering and $A_i$ is the set of its non neighbors with higher rank. The new cliques that must be computed at each level of the recursion are included in the red sets. Due to the $c$-closure property, they have size at most $c$.}
\label{sets}
\end{figure}

The approach described naturally lends itself to a recursive algorithm. The algorithm considers vertices in a recursive manner, constructing cliques starting from the last recursive call. In each recursive step, when a new vertex $c$ is considered, there are two types of cliques that need to be considered. The ones that have been already computed in older recursive calls (point 2) and the ones which were not maximal before vertex $c$ was added to the graph (point 3). The main contribution of Theorem~\ref{fox} is an algorithm computing cliques corresponding to point 3. Below is a description of the algorithm. It is worth noting that this decomposition for enumerating maximal cliques can be extended to other graph parameters, as discussed in the conclusion.

In Algorithm~\ref{alg1}, when considering a recursive call to some vertex $v$ in Line~\ref{Vertex}, the cases 2 and 3 are considered. Case 2 is examined in the \textsc{FOR} loop in Line~\ref{recs}. The algorithm tries to add the current vertex in the solution forest, to existing candidate cliques, which has been computed during precedent recursive calls. The difficulty is in Case 3, when new candidate cliques must be computed, since they have not been found so far in the recursive calls. The authors showed that the maximal cliques in the graph at the current recursive step can be found in the intersection of the neighborhood of the considered vertex (referred to as Line~\ref{Vertex}) and the neighborhood of its non-neighbors. A description is given in Figure~\ref{sets}. By the $c$-closure property of the graph, this intersection is of order smaller than $c$. This is done in the \textsc{FOR} loop of Line~\ref{for}. When such a clique is found, it is added to the solution forest $F_u$, as a new tree.  This procedure does not output exactly all the maximal cliques of the input graph but rather a superset of the solution. Some cliques that are outputted may not be maximal, but all the maximal cliques are included. To obtain the exact solutions, an additional step is required, as described by the authors below. All the cliques computed in Line~\ref{Enu} are added to the hash set $S$. Then, each clique $K$ in this set is considered, starting from largest to the smallest size. If any subset of $K$ is in $S$ the clique is rejected. Therefore, this additional step requires $\mathcal{O}(2^cc)$ time since cliques computed in Line~\ref{Enu} have a maximal size of $c$. The maximal cliques in the subgraphs are computed (in Line~\ref{Enu}) using a result from Tomita et al.~\cite{8}, which finds maximal cliques in a graph of order $n$ in time $\mathcal{O}(3^{n/3})$.

Taking all the factors into account, the complexity analysis for the algorithm that computes a superset of the maximal cliques of the graph, referred to as Algorithm~\ref{alg1}, is as follows. Assuming $T'(n,c)$ the running time of the procedure \textsc{CClosedClique}, then overall the algorithm requires $T'(n-1,c)$ time for the recursive call and an additional $\mathcal{O} (3^{c/3})$ for each CLIQUE call within the recursive step, using the algorithm of Tomita and al.~\cite{8}. Since there are at most $n$ calls to the algorithm of Tomita and al., this brings the total complexity to: $$T'(n,c)=T'(n-1,c)+\mathcal{O}(3^{c/3}n)=\mathcal{O}(3^{c/3}n^2)$$

As discussed earlier, an additional curing step is necessary to get exactly the set of maximal cliques of the graph. The rationale behind this step is explained in the following section.  Each clique requires a time complexity of $\mathcal{O}(2^cc)$ for the curing process. Hence, the overall time complexity to get exactly all the maximal cliques is $\mathcal{O}(3^{c/3}2^{c}cn^2)$. Furthermore, an additional preprocessing step is involved, which entails listing all induced 2-paths in a $c$-closed graph of order $n$. This step takes $\mathcal{O}(p(n,c))$ time, where $p(n,c)$ represents the time required for the listing process. Instead of explicitly outputting a solution, the algorithm stores the cliques in a tree structure constructed during the recursion. Similarly, our solutions in Theorem~\ref{algosens} and Theorem~\ref{algexp} will follow a similar format.

\subsection{Description of our Algorithm.}

We propose an algorithm which is inspired by Algorithm~\ref{alg1}.  It enumerates maximal cliques in each recursive step using a different approach than the original method by Fox et al. Our solution introduces a new algorithm called Algorithm~\ref{alg2}, the code for which is provided below. The FOR loop of Line~\ref{Enu} of Algorithm~\ref{alg1} has been replaced with our main contribution which is the \textsc{DoubleScan} procedure at Line~\ref{dbls} of Algorithm~\ref{alg2}.
We consider graphs $G_{v_i}$ for $i\in [n]$, formed of the union of the graphs $G[N_H(u)]$ with $u$ belonging to the set $A_i$. Here $v_i$ is the vertex currently considered in the recursive call (at Line~\ref{Vertex}). We then want to compute exactly the maximal cliques of graph $G_{v_i}$ for each $i\in [n]$. The problem that appeared in the original paper~\cite{fox2020} when enumerating the cliques was the following.  There could be a clique $K$ in $N(v_i) \cap N(u_j) \cap N(u_k)$ for some $u_j$ and $u_k$ in $A_i$
such that $K$ is maximal in $N(v_i) \cap N(u_k)$ but not maximal in $N(v_i) \cap N(u_j)$. In this case $K \cup {v_i}$
will be reported as a maximal clique of $G_{v_i}$ even though it is not  maximal. We state this in Theorem~\ref{mainth} which is part of the proof of the algorithm in the paper of Fox and al.~\cite{fox2020}.

\begin{theorem}[Fox and al.~\cite{fox2020}]
   \label{mainth} 
   Every  clique  computed in Line~\ref{for} of Algorithm~\ref{alg1} belongs to a graph $G[N_H(u)]$.  If there is such a clique $K$ which is maximal in graph $G[N_H(u)]$  but not in $G_v$ then there exists $v\neq u$ such that $K \subseteq V(G[N_H(v)])$.
    
\end{theorem}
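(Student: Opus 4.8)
The plan is to unfold the definition of $H$, reduce the non-maximality hypothesis to a single extending vertex, and then locate a second neighbourhood that contains $K$. First I would make $H$ explicit. By construction $E'$ keeps, from each wedge $v-a-b\in M[v]$ (an induced path with $v$ as an endpoint, so $v\sim a$, $a\sim b$ and $v\not\sim b$), only the edge $\{a,b\}$ not incident to $v$. Hence for a vertex $u$ that is a non-neighbour of $v$ in the current subgraph, its $H$-neighbours are exactly the vertices lying on such a wedge through $u$, i.e. $N_H(u)=N(v)\cap N(u)$ (the common neighbours of $u$ and $v$). The first assertion of the theorem is then immediate: the loop of Line~\ref{for} only outputs cliques returned by $CLIQUES(G[N_H(u)])$, each of which is by definition a clique of $G[N_H(u)]$.

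For the second assertion, suppose $K$ is maximal in $G[N_H(u)]$ but not in $G_v$. Non-maximality in $G_v$ yields a vertex $z\in V(G_v)\setminus K$ with $K\cup\{z\}$ a clique of $G_v$, hence of $G$. Since $V(G_v)=\bigcup_{u'}N_H(u')\subseteq N(v)$, we have $K\cup\{z\}\subseteq N(v)$. I would first argue $z\notin N_H(u)$: otherwise $z\in N(v)\cap N(u)$ would be adjacent to all of $K$ inside the induced subgraph $G[N_H(u)]$, contradicting maximality of $K$ there; as $z\in N(v)$ this gives $z\not\sim u$. Because $z\in V(G_v)$, there is some $w\in A_i$ with $z\in N_H(w)=N(v)\cap N(w)$, and $z\sim w$ together with $z\not\sim u$ forces $w\neq u$.

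It remains to show $K\subseteq N_H(w)$, and this is the step I expect to be the main obstacle. The difficulty is intrinsic to the union structure of $G_v$: the edges joining $z$ to the various vertices of $K$ may each be contributed by a \emph{different} part $G[N_H(u')]$, so the particular $w$ that merely hosts $z$ need not be adjacent to all of $K$. To obtain a single suitable $w$ I would instead extend the clique $\{v\}\cup K\cup\{z\}$ (a clique of $G$, since $K\cup\{z\}\subseteq N(v)$) to a maximal clique $Q$ of $G$ containing $v$, and apply the case analysis (1)--(3) above to $Q$ relative to $v$. In the decisive sub-case where $Q\setminus\{v\}$ is not maximal in $G\setminus\{v\}$, the structural fact underlying case (3) produces a non-neighbour $w$ of $v$ with $Q\setminus\{v\}\subseteq N(v)\cap N(w)=N_H(w)$; then $K\subseteq Q\setminus\{v\}\subseteq N_H(w)$, while $z\in Q\setminus\{v\}$ with $z\not\sim u$ again gives $w\neq u$.

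The crux is therefore to guarantee that this decisive sub-case is the one that occurs, i.e. to convert non-maximality of $K$ inside the \emph{union} graph $G_v$ into non-maximality of an enclosing clique in $G\setminus\{v\}$ that is witnessed by a common neighbour lying in $A_i$. This passage from the union graph to a single neighbourhood is where the wedge construction must be exploited, and, if a purely structural argument does not suffice, where one would invoke the $c$-closure hypothesis to bound common neighbourhoods and rule out the degenerate alternative (in which the extending clique is already maximal in $G\setminus\{v\}$ and no single $w\neq u$ dominates all of $K$). I would expect most of the effort to concentrate precisely on making this conversion rigorous, since the remaining bookkeeping — identifying $z$, checking $z\not\sim u$, and deriving $w\neq u$ — is routine.
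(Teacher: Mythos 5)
You should know at the outset that the paper contains no proof of this statement to compare against: Theorem~\ref{mainth} is imported from Fox et al.\ as a black box, so your proposal has to be judged on its own merits. Your groundwork is correct: $N_H(u)=N(v)\cap N(u)$ for every non-neighbour $u$ of $v$, the first assertion is indeed immediate, the witness $z$ of non-maximality satisfies $z\not\sim u$, and your case-3 argument (extend $\{v\}\cup K\cup\{z\}$ to a maximal clique $Q$ of $G$; if $Q\setminus\{v\}$ is not maximal in $G\setminus\{v\}$, the extending vertex $w$ is a non-neighbour of $v$ with $K\subseteq Q\setminus\{v\}\subseteq N_H(w)$ and $w\neq u$) is sound. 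But the step you flag as the crux --- ruling out the ``degenerate alternative'' in which $Q\setminus\{v\}$ is maximal in $G\setminus\{v\}$ --- cannot be carried out, with or without $c$-closure, because the statement as transcribed here is false. Counterexample: take vertices $v,a,b,c,u,w_1,w_2$ with $v$ adjacent to $a,b,c$; the set $\{a,b,c\}$ a triangle; $u$ adjacent to $a,b$; $w_1$ adjacent to $a,c$; $w_2$ adjacent to $b,c$; no other edges. This graph is $3$-closed (every non-adjacent pair has at most two common neighbours). Here $N_H(u)=\{a,b\}$, $N_H(w_1)=\{a,c\}$, $N_H(w_2)=\{b,c\}$, and $G_v$ is the triangle on $\{a,b,c\}$. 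The clique $K=\{a,b\}$ is maximal in $G[N_H(u)]$ and not maximal in $G_v$, yet $K$ is contained in neither $N_H(w_1)$ nor $N_H(w_2)$: the only clique extending it, $\{a,b,c\}$, is maximal in $G\setminus\{v\}$ (your case 2) and is not contained in any single part $N_H(\cdot)$.

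What your own construction actually proves is the correct, weaker, disjunctive statement: if $K$ is maximal in $G[N_H(u)]$ but $\{v\}\cup K$ is not maximal in $G$, then \emph{either} there is a non-neighbour $w\neq u$ of $v$ with $K\subsetneq Q\setminus\{v\}\subseteq N_H(w)$, \emph{or} $K$ is properly contained in a clique $C\subseteq N(v)$ that is maximal in $G\setminus\{v\}$ --- a ``type 2'' clique, which Algorithm~\ref{alg1} catches in the leaf-extension step at Line~\ref{child}, not in the loop at Line~\ref{for}. The version stated as Theorem~\ref{mainth} silently drops the second alternative, and that omission is exactly the unprovable residue you ran into; it also propagates into the paper itself, since the proof of Theorem~\ref{cor} relies on the same two-case dichotomy (in the counterexample above, the \textsc{DoubleScan} procedure would retain all three cliques $\{a,b\}$, $\{a,c\}$, $\{b,c\}$ as leaves of $T_{start}$ and output them, although none is maximal in $G_v$, whose unique maximal clique $\{a,b,c\}$ lies in no single part). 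So the gap in your proposal is not a missing trick on your side: the honest fix is to prove and then use the disjunctive statement, restricting the theorem's claim to cliques whose extensions are witnessed inside $\bigcup_{w} N_H(w)$, and to handle the second alternative by the type-2 mechanism of the algorithm.
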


There are two main differences in our approach compared to the original paper by Fox et al.~\cite{fox2020}. Firstly, to achieve output-sensitive complexities, we deviate from using the result presented by Tomita et al.~\cite{8}. Instead, we employ output-sensitive algorithms specifically designed for maximal clique enumeration, such as those proposed in references \cite{30,27,28,29,24,26,G1,25}. Secondly, we will enhance the curing step by introducing modifications to the data structures and incorporating combinatorial arguments. These adjustments will enable us to accurately enumerate all maximal cliques, resulting in an improved overall complexity and validating the claimed results.
We consider the vertices  $V(G_{v_i})$ ordered as follows. Assume that when constructing graph $G_{v_i}$, vertices $u\in A_i$ are considered in some ordering $\sigma_{v_i}=u_1,u_2,\dots,u_k$. Then we construct an ordering $\sigma_{c}$ as follows. We put the vertices  $V(G[N_H(u_1)])$ in any order followed by the vertices  $V(G[N_H(u_2)])\backslash V(G[N_H(u_1)])$ in any order followed by the vertices  $V(G[N_H(u_3)])\backslash (V(G[N_H(u_1)])\cup V(G[N_H(u_2)]))$ in any order and so on for each vertex $u_i, i\in [k]$. We will also consider at some point in the algorithm the reverse ordering that we call $\overline{\sigma_{c}}$. Once this is done, we compute the maximal cliques of graph $G_{v_i}$ as described in the \mbox{\textsc{DoubleScan}} procedure.

  \begin{figure}[h]
\begin{algorithm}[H]
  
\floatname{algorithm}{\textsc{Algorithm}}
    \begin{algorithmic}[1]

\Procedure{\text{Preprocess}}{$G$}    
  \State Enumerate all wedges in $G$
    \State Let $M$ be a mapping such that $\forall x \in V(G)$, $M[x]$ is the set
    of wedges with $x$ as an endpoint
  
 \EndProcedure
\Procedure{\text{CClosedClique2}}{$G$} 
  \If{$|V(G)|=1$}
        \State return $G$ 
    \EndIf
    \State Fix an arbitrary vertex $v\in V(G)$ 
    \State F $\leftarrow$ \textsc{CClosedClique2}($G \backslash \{v\}$) 
    \For {each leaf $l$ in $F$ with $K(l)\subseteq N(v)$}
    \State add $v$ to $F$ as a child of $l$ \label{test}
    \EndFor
   \State $E'\leftarrow$ the set of all edges in some wedge in $M[v]$ except for those edges incident to $v$
    \State $H \leftarrow G(V,E')$
    \State $R \leftarrow$ \textsc{DoubleScan}$(H)$\label{dbls}
\State \Return $F\cup R$

\EndProcedure

\end{algorithmic} 
  \caption{:}
 \label{alg2}
\end{algorithm}
\caption{This algorithm is a modification of Algorithm 1 of Fox and al.~\cite{fox2020}. Here the cliques are enumerated in each recursive step using the \textsc{DoubleScan} procedure, which is the main contribution of the paper.}

\end{figure}

\begin{figure}[h]

\begin{algorithm}[H]

\renewcommand{\thealgorithm}{}
\floatname{algorithm}{\textsc{DoubleScan(H)}}

    \begin{algorithmic}[1]

\State Initialise two prefix trees $T_{start},T_{end}$.
\State Initialise a set of trees $T_{check}$
\State Let $\sigma_{v_i}=u_1,u_2,\dots,u_k$.

    \For{($i=1$, $i \leq k$, $i++$)}

  \For{each clique $K$ in $G[N_H(u_i)]$}
  \State order the vertices of $K$ following $\sigma_c$
  \If{$K$ is not in $T_{start}$}
  \State Add it to $T_{start}$
  \EndIf
   \For {any prefix $P_K$ of $K$}
  \If {$P_K$ is a leaf in $T_{start}$ }
  \State remove it from $T_{start}$ \label{delstart}
 \EndIf
  \EndFor
   \EndFor
    
    \EndFor
    
      \For{($i=k$, $i \geq 1$, $i--)$}
  \For{each clique $K$ in $G[N_H(u_i)]$}
  \State order the vertices of $K$ following $\overline{\sigma_c}$
  \If{$K$ is not in $T_{end}$}
  \State Add it to $T_{end}$
  \EndIf
   \For {any prefix $P_K$ of $K$}
  \If {$P_K$ is a leaf in $T_{end}$ }\label{rev}
  \State order its vertices following ${\sigma_c}$
  \State add it to $T_{check}$
 \EndIf
  \EndFor
   \EndFor
    
    \EndFor

  \For{each element $K$ of $T_{check}$} 
  
 \If{$K$ is in $T_{start}$}
 \State remove it from $T_{start}$
 \EndIf
  \EndFor
    
\State add every clique corresponding to a leaf of $T_{start}$ to set $F_H$ as a tree
    
\State add $v$ to $F_H$ with an edge between $v$ and every current root in $F_H$
    
\State \Return {$ F_H$}

\end{algorithmic}

 \caption{:}
  \label{proc}
\end{algorithm}

\caption{This procedure enumerates the maximal cliques in each recursive step of the algorithm. Moreover, it checks for incorrect maximal cliques by storing candidate problematic cliques in a suffix tree.}
\end{figure}

\subsection{Proof of Correctness and time complexity.}

Note first that in the original algorithm, the FOR loop at Line~\ref{for} iteratively enumerates all maximal cliques of the graphs $G[N_H(u_i)]$. However, as discussed earlier, the algorithm computes a superset of these cliques, and additional steps are required to obtain precisely the maximal cliques we desire. Our objective is to demonstrate that the \textproc{DoubleScan} procedure precisely computes the set of maximal cliques for the graphs $G[N_H(u_i)]$. The steps of the proof are as follows. If ones enumerates the cliques of the graph $G[N_H(u_i)]$ iteratively, the problem that appears is that a maximal clique can be maximal in some graph $G[N_H(u_i)]$ but not maximal in the graph if its vertex set is included in the vertex set of some other graph $G[N_H(u_j)]$ with $j \neq i$. The \textsc{DoubleScan} procedure specifically checks the outputted maximal cliques which are in the intersections of these subgraphs. It does this in two steps. First it checks all the cliques in the subgraphs iteratively starting from the first one, considering the order $\sigma_c$. Then it essentially does the same thing but in the reverse ordering (in the second for loop). This is necessary since some maximal cliques in the subgraphs which are not maximal in the graph may have been missed during the first scan.

\begin{theorem}
\label{cor}
    The \textsc{DoubleScan} procedure outputs exactly all maximal cliques of graph $G_{v_i}$.
\end{theorem}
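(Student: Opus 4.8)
The plan is to prove the two inclusions separately---\emph{completeness} (every maximal clique of $G_{v_i}$ is output) and \emph{soundness} (every output clique is maximal in $G_{v_i}$)---after first reducing the statement to a purely combinatorial claim about a finite family of cliques ordered by $\sigma_c$. First I would fix the candidate family $\mathcal{C}$ consisting of all maximal cliques of the induced subgraphs $G[N_H(u_i)]$, $u_i \in A_i$; these are exactly the cliques the two scans feed into the tries. By Theorem~\ref{mainth}, every maximal clique of $G_{v_i}$ lies in $\mathcal{C}$, and if a member $K\in\mathcal{C}$ maximal in $G[N_H(u)]$ fails to be maximal in $G_{v_i}$, then $K\subseteq V(G[N_H(u')])$ for some $u'\neq u$; since these subgraphs are induced, $K$ is then a clique of $G[N_H(u')]$, hence contained in a maximal clique of that subgraph, and as $K$ is not maximal in $G_{v_i}$ this containment is strict. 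Conversely a candidate strictly contained in another is plainly non-maximal in $G_{v_i}$. Thus the theorem reduces to: \textsc{DoubleScan} must output exactly the inclusion-maximal elements of $\mathcal{C}$, each once. I would also record the elementary facts that distinct cliques of the same subgraph are pairwise incomparable (both maximal there) and that the guard ``if $K$ is not in $T_{start}$'' handles duplicates.

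Next I would pin down what each phase computes. The forward loop maintains $T_{start}$ as the trie of the $\sigma_c$-sorted candidates, deleting any stored clique that becomes a proper prefix of a newly inserted one; its net effect is to discard every candidate that is a \emph{prefix}, under $\sigma_c$, of another candidate. Dually, the reverse loop sorts by $\overline{\sigma_c}$ and, rather than deleting, \emph{collects} into $T_{check}$ every candidate that is a prefix under $\overline{\sigma_c}$ of another---equivalently, every candidate that is a \emph{suffix}, under $\sigma_c$, of another candidate. The final loop subtracts $T_{check}$ from $T_{start}$. Hence, modulo the reduction, the surviving set is $\mathcal{C}$ minus the prefix-dominated and the suffix-dominated candidates.

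The crux---and what I expect to be the main obstacle---is a structural lemma asserting that these two boundary cases are exhaustive: \emph{if $K \subsetneq K'$ with $K,K'\in\mathcal{C}$, then the vertices of $K$ form either a prefix or a suffix of $K'$ in the order $\sigma_c$}. This is precisely what rules out an ``interior'' containment, in which the extra vertices $K'\setminus K$ interleave the vertices of $K$ and would be detected by neither scan. To prove it I would combine two ingredients: the maximality of $K$ in its own subgraph $G[N_H(u)]=G[N(v_i)\cap N(u)]$, which forces every vertex of $K'\setminus K$ to lie outside $N(u)$ (otherwise $K$ could be extended inside that subgraph), and the first-appearance block structure of $\sigma_c$, in which a vertex's rank is governed by the earliest $u_j$ adjacent to it. The delicate point is to show that the excluded vertices cannot acquire $\sigma_c$-ranks falling strictly between ranks of $K$; establishing this confinement to a single side is where the real work lies, and it is the step I would expect to require the full force of the construction of $\sigma_c$ together with the definition of the subgraphs $G[N_H(u_i)]$.

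With the structural lemma in hand, both inclusions follow quickly. For completeness, a maximal clique of $G_{v_i}$ is, by the reduction, an inclusion-maximal element of $\mathcal{C}$; being contained in no other candidate, it is neither a prefix nor a suffix of one, so the forward loop never deletes it and the reverse loop never places it in $T_{check}$---it therefore remains in $T_{start}$ and is reported. For soundness, any non-maximal candidate is strictly contained in some $K'\in\mathcal{C}$; by the lemma it is a prefix of $K'$---whence removed during the forward loop---or a suffix of $K'$---whence captured in $T_{check}$ and removed by the final subtraction---so it is not reported. Together with the deduplication guard, this shows that $T_{start}$ ends equal to the set of maximal cliques of $G_{v_i}$, each listed once, which is the claim.
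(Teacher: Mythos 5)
Your reading of the procedure and your overall architecture match the paper's own argument: reduce via Theorem~\ref{mainth} to containments among the candidate family $\mathcal{C}$ of maximal cliques of the subgraphs $G[N_H(u_i)]$, observe that the forward scan deletes exactly the $\sigma_c$-prefix-dominated candidates (Line~\ref{delstart}), that the reverse scan collects exactly the $\sigma_c$-suffix-dominated ones into $T_{check}$, and that the final pass subtracts $T_{check}$ from $T_{start}$. But your proposal is not a proof: the ``structural lemma'' you isolate --- if $K \subsetneq K'$ with $K, K' \in \mathcal{C}$, then $K$ is a prefix or a suffix of $K'$ under $\sigma_c$ --- is the entire content of the theorem, and you explicitly leave it unproven (``where the real work lies''). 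Deferring it is a genuine gap, not a technicality. It is worth noting that the paper does not prove it either: its proof simply asserts that ``if we order the vertices of $K\cup S$ following $\sigma_c$ then the vertices of $K$ are a prefix of clique $K\cup S$,'' with no justification. So you have correctly located the crux, but neither you nor the paper establishes it.

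In fact the missing lemma cannot be established, because it is false; this is also why your two proposed ingredients do not suffice. Maximality of $K$ in $G[N_H(u_i)]$ only forces the extension vertices $S = K'\setminus K$ out of $N(u_i)$; it does not prevent a vertex of $S$ from lying in $N(u_l)$ for some $l < i$, which places it in an \emph{earlier} $\sigma_c$-block than the vertices of $K$. Concretely, let $v$ be adjacent to $a, b, s_1, s_2$, which form a $K_4$, and let $u_1, u_2, u_3$ be non-neighbors of $v$ with $N(u_1)\cap N(v) = \{s_1\}$, $N(u_2)\cap N(v) = \{a,b\}$, and $N(u_3)\cap N(v) = \{a,b,s_1,s_2\}$. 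With $\sigma_{v_i} = u_1, u_2, u_3$ the block construction gives $\sigma_c = s_1, a, b, s_2$. The candidate $K = \{a,b\}$ (maximal in $G[N_H(u_2)]$) is strictly contained in the candidate $K' = \{s_1,a,b,s_2\}$ (maximal in $G[N_H(u_3)]$), yet it is neither a prefix nor a suffix of $K'$: the vertex $s_1$ of $S$ precedes $a,b$ and the vertex $s_2$ of $S$ follows them. Hence the forward scan never deletes $K$, the reverse scan never places it in $T_{check}$, and \textsc{DoubleScan} outputs the non-maximal clique $\{a,b\}$. (Choosing the other order $u_2, u_1, u_3$ merely shifts the failure to the candidate $\{s_1\}$.) So the obstacle you flagged is fatal to this strategy as it stands: the final filter would have to test genuine set containment among candidates rather than only prefix/suffix relations, or the ordering $\sigma_c$ would have to be redesigned so that containment between candidates really does imply prefix containment.
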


\begin{proof}
Assume by contradiction that the procedure outputs some clique $K$ which is not maximal in graph $G_{v_i}$, where we recall that $v_i$ is the vertex of the recursive step at Line~\ref{Vertex}. By construction, this clique belongs to a graph $G[N_H(u_i)]$ for some $i\in [k]$. Using Theorem~\ref{mainth}, there are two cases. First there exists a graph $G[N_H(u_j)]$ with $j> i$ (the indexes in $\sigma_{v_i}$) such that $K$ included in $V(G[N_H(u_j)])$ and $K$ not maximal in $G[N_H(u_j)]$. Or in the second case there exists a graph $G[N_H(u_j)]$ with $j< i$ such that $K$ included in $V(G[N_H(u_j)])$ and $K$ not maximal in $G[N_H(u_j)]$. Consider the first case. Assume that the vertices of $K$ are ordered following $\sigma_c$. If $K$ is maximal in $G[N_H(u_i)]$ but not in $G[N_H(u_j)]$ with $j\geq i$ then there exists a set $S\subseteq V(G[N_H(u_j)])$ of vertices such that $K\cup S$ is maximal in $G[N_H(u_j)]$. Note that if we order the vertices of $K\cup S$ following $\sigma_c$ then the vertices of $K$ are a prefix of clique $K\cup S$. Thus, if $K$ has been stored in the prefix tree $T_{start}$ it will be deleted at Line~\ref{delstart} of the procedure, as it is maximal in $G[N_H(u_i)]$ but not in $G[N_H(u_j)]$. This gives the proof for the first case. The second case is similar. Except that we scan graphs $G[N_H(u_i)]$ for $i$ from indices $k$ to $1$ and order the vertices of the cliques following $\overline{\sigma_{c}}$. This allows us to find cliques which are maximal in some graph $G[N_H(u_i)]$ but not in a graph $G[N_H(u_j)]$ with $j< i$. Once such a clique is found in Line~\ref{rev} it is stored in the set $T_{check}$. Then every clique in in this set is reordered following $\sigma_{c}$ and is removed from $T_{start}$ if it appears there. Finally once this is done, tree $T_{start}$ contains exactly every maximal clique of graph $G_{v_i}$.
\end{proof}

\begin{lemma}
\label{exp}
   Using the algorithm of Tomita et al.~\cite{8} for maximal clique enumeration, The \textproc{DoubleScan} procedure has time complexity $\mathcal{O}(poly(c)3^{c/3}n)$.
\end{lemma}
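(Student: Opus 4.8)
The plan is to bound the cost of each of the three passes of \textsc{DoubleScan} separately and then add them up, the whole estimate being driven by the fact that every subgraph $G[N_H(u_i)]$ has only $\mathcal{O}(c)$ vertices. First I would record the two structural facts I need. Since each $u_i$ lies in $A_i$ it is a non-neighbour of $v_i$, so by the $c$-closure property the common neighbourhood determining $G[N_H(u_i)]$ has order at most $c$; consequently $G[N_H(u_i)]$ has at most $3^{c/3}$ maximal cliques (the $3^{m/3}$ bound for an $m$-vertex graph), each of size at most $c$, and by the algorithm of Tomita et al.~\cite{8} they can all be listed in total time $\mathcal{O}(3^{c/3})$. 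Since $\sigma_{v_i}=u_1,\dots,u_k$ with $k=|A_i|\le n$, the total number of cliques handled in each loop is at most $n\,3^{c/3}$, and the time spent inside all the calls to Tomita et al. is $\mathcal{O}(n\,3^{c/3})$.

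Next I would analyse the per-clique work on the prefix trees. Each clique $K$ first has its vertices sorted according to $\sigma_c$; precomputing once the rank of every vertex of $G_{v_i}$ in $\sigma_c$ (an $\mathcal{O}(nc)$ cost charged to the preprocessing), this sort costs $\mathcal{O}(c\log c)$. Storing the children of every trie node in a hash table makes each descent step cost $\mathcal{O}(1)$, so inserting a clique of length at most $c$ into $T_{start}$ or $T_{end}$ costs $\mathcal{O}(c)$; moreover a single root-to-node traversal simultaneously visits every prefix $P_K$ of $K$, so the leaf tests and the deletion at Line~\ref{delstart} (and the symmetric tests in the reverse loop) can be folded into that one descent, also at cost $\mathcal{O}(c)$. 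Hence each clique contributes $\mathcal{O}(poly(c))$ to the forward and to the reverse loop, giving $\mathcal{O}(n\,3^{c/3}\,poly(c))$ for each of them.

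I would then bound the final clean-up pass. In the reverse loop each processed clique creates at most $c$ entries in $T_{check}$, one per prefix that is a leaf of $T_{end}$, so $|T_{check}|\le n\,3^{c/3}c$; the clean-up loop treats every element with one $\mathcal{O}(c\log c)$ reordering into $\sigma_c$ and one $\mathcal{O}(c)$ lookup/deletion in $T_{start}$, which is again $\mathcal{O}(n\,3^{c/3}\,poly(c))$. Summing the three contributions with the $\mathcal{O}(n\,3^{c/3})$ spent in Tomita's algorithm yields the claimed bound $\mathcal{O}(poly(c)\,3^{c/3}\,n)$.

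The step I expect to be the genuine obstacle is making the per-clique trie cost truly independent of $n$. The alphabet of the prefix trees is the vertex set of $G_{v_i}$, which may contain $\Theta(n)$ symbols, so a naive child lookup would introduce a stray $\log n$ (or worse) factor and break the bound. I would resolve this by relabelling vertices through their $\sigma_c$-ranks and hashing the children at each node, so that descents are $\mathcal{O}(1)$; I would also have to argue carefully that checking all prefixes of $K$ for leaf status, and keeping the leaf flags consistent after deletions, really can be amortised into the single insertion descent rather than repeated once per prefix, since it is this amortisation that keeps the work at $\mathcal{O}(c)$ instead of $\mathcal{O}(c^2)$ per clique.
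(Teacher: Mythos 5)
Your proposal is correct and follows essentially the same route as the paper's proof: bound the Tomita calls by $n$ subgraphs of order at most $c$ (giving $\mathcal{O}(3^{c/3}n)$ enumeration time and at most $n\,3^{c/3}$ cliques via the Moon--Moser bound), then charge each clique $\mathcal{O}(poly(c))$ for ordering and for the prefix-tree operations. Your treatment is in fact somewhat more careful than the paper's, which simply asserts an $\mathcal{O}(c\log c)$ cost per trie operation without addressing the large-alphabet issue you resolve by rank relabelling and hashed children, and without separately bounding the $T_{check}$ clean-up pass.
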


\begin{proof}

Computing all maximal cliques of graph $G_{v_i}$ requires time $\mathcal{O}(3^{c/3}n)$ since there are at most $n$ graphs of cardinality bounded by $c$, with $c$ the closure of the graph. After that, the vertices of each clique are ordered. Then, we search, insert or delete the cliques from a prefix tree. These operations cost at most $\mathcal{O}(c\log c)$ for each clique since these cliques are of order at most $c$.   There can be at most $\mathcal{O}(3^{c/3})$ maximal cliques in a graph of order $c$, by the Moon and Moser~\cite{MM} bound. Thus, in total the complexity of the procedure is  $\mathcal{O}( poly(c)3^{c/3}n)$.

\end{proof}

In the next lemma, we demonstrate how utilizing output sensitive algorithms for maximal clique enumeration in the subgraphs in each recursive step leads to a total time complexity that is output sensitive. To accomplish this we need to examine in detail how the complexities accumulate at each step. We will consider each graph $G_i$ with $i$ the current vertex in the recursive call. We will bound the number of maximal cliques of the graphs $G[N_H(u_i)]$ for $i\in [k]$ with respect to the number of maximal cliques of graph $G_i$.

\begin{lemma}
\label{lemsens}
    Using an output sensitive algorithm for maximal clique enumeration, The \textproc{DoubleScan} procedure has time complexity $\alpha_i\mathcal{O}(poly(c))+\mathcal{O}(poly(c)n)$ with $\alpha_i$ the number of maximal cliques of graph $G_i$.
\end{lemma}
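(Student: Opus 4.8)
The plan is to split the running time of \textsc{DoubleScan} into three contributions: (i) the maximal clique enumeration performed inside each subgraph $G[N_H(u_j)]$, (ii) the prefix tree bookkeeping (insertions, searches and prefix deletions into $T_{start}$, $T_{end}$, $T_{check}$), and (iii) the final reconciliation loops. For (i) I would replace the Tomita et al.~\cite{8} routine used in Lemma~\ref{exp} by a polynomial delay maximal clique enumerator, taking any of the output sensitive algorithms cited in the introduction \cite{30,27,28,29,24,26,man19,25}. On a subgraph of order at most $c$ such an enumerator spends $\mathcal{O}(poly(c))$ time per reported clique plus an $\mathcal{O}(poly(c))$ start-up cost; since $A_i$ has at most $n$ vertices there are at most $k\le n$ subgraphs, so the start-up costs total $\mathcal{O}(poly(c)\,n)$. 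For (ii) and (iii), every clique handled has at most $c$ vertices (because each $G[N_H(u_j)]$ has order below $c$ by the closure), so ordering it along $\sigma_c$ and performing the insert/search/delete together with the prefix scan costs $\mathcal{O}(poly(c))$ per clique (the prefix loop touches at most $c$ prefixes, each in $\mathcal{O}(c\log c)$). Hence the whole procedure runs in $\mathcal{O}(poly(c))\cdot\beta_i+\mathcal{O}(poly(c)\,n)$, where, writing $\mathcal{MC}(X)$ for the set of maximal cliques of a graph $X$, the quantity $\beta_i=\sum_{j=1}^{k}|\mathcal{MC}(G[N_H(u_j)])|$ is the total number of subgraph-maximal cliques that actually get enumerated.

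Everything then reduces to bounding $\beta_i$ against $\alpha_i=|\mathcal{MC}(G_{v_i})|$. First I would record a clean per-subgraph inequality. Fix $j$ and set $V_j=V(G[N_H(u_j)])$; recall $V_j=N(v_i)\cap N(u_j)$, so $|V_j|<c$ since $v_i$ and $u_j$ are a non-adjacent pair. As $G[N_H(u_j)]$ is an induced subgraph of $G_{v_i}$, every $L\in\mathcal{MC}(G[N_H(u_j)])$ equals $K'\cap V_j$ for some maximal clique $K'$ of $G_{v_i}$: extend $L$ to a maximal clique $K'$ of $G_{v_i}$ and intersect back, the maximality of $L$ in $G[V_j]$ forcing $L=K'\cap V_j$. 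The assignment $L\mapsto K'$ is injective on $\mathcal{MC}(G[N_H(u_j)])$ because $K'$ recovers $L$ as $K'\cap V_j$, which already gives $|\mathcal{MC}(G[N_H(u_j)])|\le\alpha_i$ for every single $j$.

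The real work, and the step I expect to be the main obstacle, is to prevent the naive summation $\beta_i\le k\,\alpha_i\le n\,\alpha_i$ from costing a spurious factor $n$. I would rewrite $\beta_i=\sum_{K'\in\mathcal{MC}(G_{v_i})}m(K')$ with $m(K')=|\{\,j:\ K'\cap V_j\in\mathcal{MC}(G[V_j])\,\}|$ counting how many subgraphs re-discover the trace of $K'$, and control the total multiplicity through the closure. The lever I would use is the elementary fact that a non-adjacent pair $\{a,b\}\subseteq V(G_{v_i})$ lies in $V_j$ for fewer than $c$ indices $j$: each such $u_j$ is a common neighbour of both $a$ and $b$, so $c$-closure caps their number by $|N(a)\cap N(b)|<c$. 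Thus no non-edge of $G_{v_i}$ can be witnessed in more than $c$ subgraphs, and symmetrically $|V_j|<c$ forces each $u_j$ to meet only few vertices of $G_{v_i}$. I would then split, for each $K'$, the indices contributing to $m(K')$ into those with $K'\subseteq V_j$ (a verbatim copy of $K'$) and those with a proper trace $K'\cap V_j\subsetneq K'$; the latter exhibit a vertex $a\in K'\setminus V_j$, equivalently a concrete non-edge or missing common neighbour certifying the truncation, whose number of admissible $u_j$ the above closure fact bounds.

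Turning these certificates into a bound of the form $\sum_{K'}m(K')\le\alpha_i\,poly(c)+poly(c)\,n$ is the crux, and the place where the weak closure must genuinely be used rather than merely the per-subgraph order bound; the $\mathcal{O}(poly(c)\,n)$ additive term is there precisely to absorb the unavoidable one-pass-per-subgraph contribution that the multiplicity argument cannot charge to $\alpha_i$. Combining this multiplicity bound with the cost expression $\mathcal{O}(poly(c))\cdot\beta_i+\mathcal{O}(poly(c)\,n)$ from the first paragraph then yields the claimed complexity $\alpha_i\mathcal{O}(poly(c))+\mathcal{O}(poly(c)\,n)$.
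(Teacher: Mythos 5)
Your cost accounting is sound and matches the paper's: an output-sensitive enumerator charges $\mathcal{O}(poly(c))$ start-up per subgraph (totalling $\mathcal{O}(poly(c)n)$ since $k\le n$) plus $\mathcal{O}(poly(c))$ per reported clique, and the prefix-tree bookkeeping costs $\mathcal{O}(poly(c))$ per clique because every clique has fewer than $c$ vertices. Your per-subgraph injection $L\mapsto K'$ (extend $L$ to a maximal clique $K'$ of $G_{v_i}$; maximality of $L$ in $G[V_j]$ forces $L=K'\cap V_j$, so $K'$ determines $L$) is also correct and gives $|\mathcal{MC}(G[N_H(u_j)])|\le\alpha_i$ for each $j$. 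The problem is that you then explicitly leave the decisive step --- bounding $\sum_{K'}m(K')$ by $\alpha_i\,poly(c)+poly(c)\,n$ --- as ``the crux'' without proving it, and the certificate scheme you sketch cannot deliver it. A proper trace $K'\cap V_j\subsetneq K'$ is certified only by the non-adjacency of some $a\in K'\setminus V_j$ with $u_j$; a fixed vertex $a$ can be non-adjacent to essentially all of the $u_j$'s, so this certificate does not cap $m(K')$. Your clean closure fact (a non-adjacent pair of $G_{v_i}$ lies in fewer than $c$ of the sets $V_j$) does not apply here, because a trace of a clique contains no non-edge and $u_j\notin V(G_{v_i})$. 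So as written the proposal does not prove the stated lemma: there is a genuine gap at exactly the step the lemma needs.

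For comparison, the paper closes this step by a different route: it invokes Theorem~\ref{mainth} (through the proof of Theorem~\ref{cor}) to assert that any clique maximal in some $G[N_H(u_j)]$ but not in $G_{v_i}$ appears, after ordering by $\sigma_c$ or $\overline{\sigma_c}$, as a prefix of some maximal clique; since a maximal clique has at most $c$ prefixes and each prefix can recur in at most $n$ intersections, the paper concludes $\sum_j\beta_j\le nc\,\alpha_i$. Note two things. First, this bound still carries the factor $n$, so the paper's own argument yields $\alpha_i\mathcal{O}(n\,poly(c))+\mathcal{O}(poly(c)n)$ rather than the literal statement of Lemma~\ref{lemsens}; that weaker form is precisely what is fed into the recursion of Theorem~\ref{algosens} (the term $\alpha_n\mathcal{O}(npoly(c))$). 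Second, the ``naive'' summation you dismissed, $\sum_j\beta_j\le k\alpha_i\le n\alpha_i$, which follows immediately from your injection, is already a factor $c$ better than the paper's bound and suffices for everything downstream. So your diagnosis that removing the factor $n$ requires a genuinely new idea is correct --- neither your certificates nor the paper's prefix-counting achieve it --- but since you neither complete that step nor fall back to the weaker $n\alpha_i$ bound and state the correspondingly weaker conclusion, your write-up proves less than it claims.
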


\begin{proof}
Let $\beta_i$ be the number of cliques of graph $G[N_H(u_i)]$, for each $i\in [k]$. As shown in Theorem~\ref{mainth} and in the proof of Theorem~\ref{cor}, if a clique $C$ is maximal in some graph $G[N_H(u_i)]$ but not in $G_i$ then there exists a maximal clique $K$ in some graph $G[N_H(u_j)]$ with $j\neq i$ such that the ordered vertices of $C$ form a prefix of the vertices of the ordered vertices of $K$ (following $\sigma_c$ or $\overline{\sigma_c}$). Therefore the number of maximal cliques of graphs $G[N_H(u_i)]$, $i\in [k]$, which is $\sum_{i=1}^k \beta_i$, is bounded by $nc\alpha_i$, with $\alpha_i$ the number of maximal cliques of $G_i$. This is due to the following reasons. First observe that there can be at most $c$ different prefixes of a given maximal clique. This gives $c\alpha_i$ different prefixes in total. Then each such suffix may appear in each graph $G[N_H(u_i)]$ for~$i\in [k]$, meaning it can be computed up to $k$ times. Since $k$ is bounded by $n$, we get the upper bound.

We enumerate all maximal cliques of each graph $G[N_H(u_i)]$ using an output sensitive algorithm, see \cite{30,27,28,29,24,26,G1,25} for instance. Usually, there is a preprocessing phase and an enumeration phase. Since the graphs are order $c$, the preprocessing phase takes $\mathcal{O}(poly(c))$. Similarly the enumeration phase is done in $\beta_i\mathcal{O}(poly(c))$. In total the enumeration for all graphs $G[N_H(u_i)]$, $i\in [k]$ is done in time $\sum_{i=1}^k\beta_i\mathcal{O}(poly(c))+\mathcal{O}(poly(c)n)$ since $k$ is bounded by $n$. Then, the procedure deals with each clique in time $\mathcal{O}(poly(c))$, similarly to the proof of Lemma~\ref{exp}. In conclusion, since $\sum_{i=1}^k \beta_i$ is bounded by $nc\alpha_i$, the total complexity is $\alpha_i\mathcal{O}(poly(c))+\mathcal{O}(poly(c)n)$.
\end{proof}

\begin{theorem}
\label{algexp}
Using the algorithm of Tomita et al.~\cite{8} for maximal clique enumeration, Algorithm~\ref{alg2} has enumeration time complexity $\mathcal{O}(poly(c)3^{c/3}n^2)$. It has an additional $\mathcal{O}(poly(c)n^2+p(n,c))$ preprocessing time complexity.
\end{theorem}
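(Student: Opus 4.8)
The plan is to set up and solve a recurrence for the running time of the recursive procedure \textsc{CClosedClique2}, and then separately account for the \textsc{Preprocess} phase. Write $T(n,c)$ for the time spent by \textsc{CClosedClique2} on an $n$-vertex $c$-closed graph, excluding preprocessing. Each invocation fixes a vertex $v$, makes a single recursive call on $G \setminus \{v\}$ (an $(n-1)$-vertex graph), then executes the Case~2 FOR loop that grafts $v$ onto the eligible leaves of $F$, builds the auxiliary graph $H$ from the wedges stored in $M[v]$, and finally calls \textsc{DoubleScan}$(H)$. This yields a recurrence of the form
\begin{equation*}
T(n,c) \le T(n-1,c) + W(n,c), \qquad T(1,c)=\mathcal{O}(1),
\end{equation*}
where $W(n,c)$ denotes the per-level work of a single invocation beyond the recursive call.

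First I would bound $W(n,c)$. By Lemma~\ref{exp}, the \textsc{DoubleScan} call alone costs $\mathcal{O}(poly(c)3^{c/3}n)$, and this is the dominant contribution. Building $H$ only reads the wedges incident to $v$; since $c$-closure forces every non-adjacent pair to share fewer than $c$ common neighbours, $v$ is the endpoint of $\mathcal{O}(cn)$ wedges, so $H$ is assembled in $\mathcal{O}(poly(c)n)$ time. The Case~2 loop traverses the prefix-tree forest $F$ to locate the leaves $l$ with $K(l)\subseteq N(v)$; I would argue that a depth-first traversal that prunes a branch as soon as it meets a vertex outside $N(v)$ visits only nodes relevant to cliques extendable by $v$, so that this step is subsumed by the per-level enumeration cost exactly as in the original recurrence of Fox et al. Consequently $W(n,c)=\mathcal{O}(poly(c)3^{c/3}n)$.

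Solving the recurrence by unrolling over the $n$ recursion levels gives
\begin{equation*}
T(n,c) = \sum_{j=1}^{n}\mathcal{O}\!\left(poly(c)\,3^{c/3}\,j\right) = \mathcal{O}\!\left(poly(c)\,3^{c/3}\,n^2\right),
\end{equation*}
which is the claimed enumeration time; correctness of the reported cliques is already guaranteed by Theorem~\ref{cor}. For the preprocessing bound, the wedge enumeration costs $p(n,c)$ by definition, and constructing the mapping $M$ requires touching each wedge once. Since $c$-closure limits the common neighbourhood of any non-adjacent pair to fewer than $c$ vertices, the total number of wedges is $\mathcal{O}(cn^2)$, so $M$ is built in $\mathcal{O}(poly(c)n^2)$ time, giving the stated $\mathcal{O}(poly(c)n^2+p(n,c))$ preprocessing complexity.

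I expect the main obstacle to be the bound on the Case~2 loop. Unlike the \textsc{DoubleScan} step, whose cost Lemma~\ref{exp} already controls, this loop ranges over the leaves of $F$, whose number may be as large as the total count of maximal cliques, which the number-of-cliques bound $3^{c-1/3}n^2$ only caps globally rather than per level. The delicate point is to show, via the pruned traversal, that the work charged at each recursion level stays within $\mathcal{O}(poly(c)3^{c/3}n)$, so that the per-level cost remains dominated by \textsc{DoubleScan} and the clean telescoping of the recurrence is justified.
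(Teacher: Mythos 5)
Your proposal is correct and follows essentially the same route as the paper: set up the recurrence $T'(n,c)=T'(n-1,c)+\mathcal{O}(poly(c)3^{c/3}n)+\mathcal{O}(poly(c)n)$ with the per-level work dominated by the \textsc{DoubleScan} cost from Lemma~\ref{exp}, unroll over the $n$ levels to get $\mathcal{O}(poly(c)3^{c/3}n^2)$, and charge the wedge computation $p(n,c)$ plus the construction of $M$ to preprocessing. If anything, you are more explicit than the paper, which simply defers the Case~2 leaf-grafting and $H$-construction costs to the analysis of Fox et al.; your pruned-traversal discussion addresses a point the paper leaves implicit.
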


\begin{proof}
The proof is essentially the same as in the original paper of Fox and al.~\cite{fox2020}. Assuming $T'(n,c)$ the running time of the procedure \textsc{CClosedClique2}, then overall the algorithm requires $T'(n-1,c)$ time for the recursive call and an additional $\mathcal{O}(poly(c)3^{c/3}n)$ time for the call to the \textsc{DoubleScan} procedure, as shown in Theorem~\ref{exp}. This brings the total time complexity to: $$T'(n,c)=T'(n-1,c)+\mathcal{O}(poly(c)3^{c/3}n)+\mathcal{O}(poly(c)n)=\mathcal{O}(poly(c)3^{c/3}n^2)+\mathcal{O}(poly(c)n^2)$$

The additional $\mathcal{O}(p(n,c))$ comes from the computation of the induced $2$-paths.
\end{proof}

\begin{theorem}
\label{algosens}
  Using an output sensitive algorithm for maximal clique enumeration, algorithm~\ref{alg2} has enumeration time complexity $\alpha_i \mathcal{O}(poly(c))$ with $\alpha_i$ the number of maximal cliques of graph $G_i$. It has an additional $\mathcal{O}(poly(c)n^2+p(n,c))$ preprocessing time complexity.
\end{theorem}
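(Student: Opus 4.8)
The plan is to mirror the recurrence analysis used for Theorem~\ref{algexp}, but to replace the worst-case cost of each \textsc{DoubleScan} call by the output-sensitive bound of Lemma~\ref{lemsens}, and then to argue that the sum of the per-level clique counts collapses to a quantity proportional to the global number $\alpha$ of maximal cliques of $G$. Writing $T'(n,c)$ for the running time of \textsc{CClosedClique2} and letting $v_i$ be the vertex fixed at Line~\ref{Vertex} of the $i$-th recursive call, Lemma~\ref{lemsens} bounds the $i$-th \textsc{DoubleScan} call by $\alpha_i\,\mathcal{O}(poly(c))+\mathcal{O}(poly(c)\,n)$, where $\alpha_i$ is the number of maximal cliques of $G_{v_i}$. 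Unrolling the recursion over the $n$ levels then gives
\[
T'(n,c)=\sum_{i=1}^{n}\Bigl(\alpha_i\,\mathcal{O}(poly(c))+\mathcal{O}(poly(c)\,n)\Bigr)=\Bigl(\sum_{i=1}^{n}\alpha_i\Bigr)\mathcal{O}(poly(c))+\mathcal{O}(poly(c)\,n^2).
\]
The second term is independent of the output size, so I would fold it, together with the wedge enumeration cost $p(n,c)$, into the preprocessing bound. Everything then reduces to controlling $\sum_i\alpha_i$.

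The crux, and the step I expect to be the main obstacle, is to prove $\sum_{i=1}^{n}\alpha_i=\mathcal{O}(c\,\alpha)$ by a charging argument. Each clique $C$ counted by $\alpha_i$ is a maximal clique of $G_{v_i}$, and \textsc{DoubleScan} reports $C\cup\{v_i\}$; by the three-case decomposition and the correctness of the recursion this is a maximal clique of the current graph $G[V_i]$ containing $v_i$. Following the forced extensions performed as the recursion is unwound (each ancestor vertex $v_j$, $j<i$, is attached at Line~\ref{test} exactly when the current leaf is contained in $N(v_j)$), this clique grows into a uniquely determined maximal clique $K^{\ast}$ of $G$ with $C\cup\{v_i\}\subseteq K^{\ast}$ and $K^{\ast}\setminus(C\cup\{v_i\})\subseteq\{v_1,\dots,v_{i-1}\}$. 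I would then establish the key identity $C\cup\{v_i\}=K^{\ast}\cap V_i$: the inclusion $\subseteq$ holds because $C\cup\{v_i\}\subseteq K^{\ast}$ and $v_i=\min_{\sigma}(C\cup\{v_i\})$, and equality follows because $C\cup\{v_i\}$ is maximal in $G[V_i]$ while $K^{\ast}\cap V_i$ is a clique of $G[V_i]$ containing it. Hence the map $C\mapsto K^{\ast}$ has at most one preimage per level $i$, and a level contributes a preimage only when $v_i\in K^{\ast}$.

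It then remains to bound, for a fixed $K^{\ast}$, the number of such levels. Since every clique reported by \textsc{DoubleScan} lies inside some $G[N_H(u)]$ (Theorem~\ref{mainth}) and such a subgraph has order $\mathcal{O}(c)$ by the (weak) closure of $G$, every reported clique has size $\mathcal{O}(c)$, so $|K^{\ast}\cap V_i|=|C\cup\{v_i\}|=\mathcal{O}(c)$. Ordering $V(K^{\ast})$ along $\sigma$, the sets $K^{\ast}\cap V_i$ shrink as $i$ grows, so the requirement $|K^{\ast}\cap V_i|=\mathcal{O}(c)$ confines $v_i$ to the last $\mathcal{O}(c)$ vertices of $K^{\ast}$. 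Thus each maximal clique of $G$ is charged $\mathcal{O}(c)$ times, giving $\sum_i\alpha_i=\mathcal{O}(c\,\alpha)$. Substituting into the recurrence yields an enumeration time of $\alpha\,\mathcal{O}(poly(c))$ after absorbing the factor $c$ into $poly(c)$, and leaves $\mathcal{O}(poly(c)\,n^2+p(n,c))$ as preprocessing, as claimed.

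A secondary point I would have to verify carefully is the cost of the leaf-update loop at Line~\ref{test} of Algorithm~\ref{alg2}, where at each level every current leaf $l$ is tested for $K(l)\subseteq N(v_i)$. The total number of such tests can again be charged through the map above, now applied to all maximal cliques of the graphs $G[V_i]$ rather than only the \textsc{DoubleScan} outputs; since those cliques are no longer of size $\mathcal{O}(c)$, this charging costs up to $\mathcal{O}(n)$ preimages per maximal clique, and it is precisely here that the extra $n$ factor of the abstract ($\alpha\,\mathcal{O}(n\,poly(c))$) appears and must be placed either in the enumeration term or in the $\mathcal{O}(poly(c)\,n^2)$ overhead. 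Reusing the prefix-tree structures already maintained by \textsc{DoubleScan} so that each test costs $\mathcal{O}(poly(c))$ amortized is the delicate bookkeeping step that determines the exact split between the output-sensitive leading term and the preprocessing cost.
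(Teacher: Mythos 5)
Your recurrence setup and unrolling coincide with the paper's, but your key combinatorial step is genuinely different. The paper disposes of $\sum_i\alpha_i$ with a one-line accounting argument: new root-to-leaf paths in the solution forest are created \emph{only} inside \textsc{DoubleScan} (Line~\ref{test} extends existing paths but never creates one), and since the final forest contains exactly the maximal cliques of $G$ as root-to-leaf paths, $\sum_i\alpha_i$ equals $\alpha$ (up to the single path from the base case). You instead build an explicit charging map $C\mapsto K^{\ast}$ through the forced leaf extensions, prove injectivity per level via the identity $C\cup\{v_i\}=K^{\ast}\cap V_i$, and use $|C|<c$ (valid, since every maximal clique of $G_{v_i}$ lies in some $G[N_H(u)]$ of order less than $c$) to confine the charging levels to the last $c$ vertices of $K^{\ast}$, obtaining $\sum_i\alpha_i=\mathcal{O}(c\alpha)$ rather than the paper's exact equality. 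The loss of the factor $c$ is harmless, as it is absorbed into $poly(c)$. One simplification is available to you: your argument for the identity invokes maximality of $C\cup\{v_i\}$ in $G[V_i]$, a fact this paper never proves (it is inherited informally from Fox et al.\ and is the delicate converse of the type-3 classification); but the identity actually holds \emph{by construction}, because unwinding only ever appends vertices of $\{v_1,\dots,v_{i-1}\}$ to the path, so $K^{\ast}\cap V_i=C\cup\{v_i\}$ immediately. With that substitution your charging argument is self-contained, and it is even robust in situations the paper's equality argument silently excludes (e.g.\ the same maximal clique of $G$ arising as a path at two different levels). In short: the paper's argument is shorter and tighter; yours is longer but makes explicit why a clique counted at level $i$ cannot be recounted elsewhere.

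Two secondary observations. You apply Lemma~\ref{lemsens} as stated ($\alpha_i\mathcal{O}(poly(c))+\mathcal{O}(poly(c)n)$ per call), whereas the paper's own proof plugs $\alpha_n\mathcal{O}(n\,poly(c))$ into the recurrence, consistent with the internal bound $\sum_i\beta_i\le nc\alpha_i$ of that lemma's proof but not with its statement; your reading yields the cleaner bound $\alpha\mathcal{O}(poly(c))+\mathcal{O}(poly(c)n^2)$, the paper's yields $\alpha\mathcal{O}(n\,poly(c))+\mathcal{O}(poly(c)n^2)$ as in the abstract, and this discrepancy is the paper's, not yours. Finally, your worry about the cost of scanning all leaves at Line~\ref{test} points at a real omission: the paper's recurrence contains no term at all for traversing the (possibly $\alpha$-many) leaves of $F$ at each of the $n$ levels, so you are being more careful than the source; but, like the paper, you leave that cost unresolved, and a fully rigorous complexity proof would need the amortized charging you only sketch.
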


\begin{proof}
Using Lemma~\ref{lemsens}, the recursive formula is as follows in that case :

$$T'(n,c)=T'(n-1,c)+\alpha_n \mathcal{O}(npoly(c))+\mathcal{O}(poly(c)n)$$

where $\alpha_n$ is the number of maximal cliques of graph $G_{v_n}$. In total this is equal to :

$$T'(n,c)=(\sum_{i=1}^{n}\alpha_i \mathcal{O}(npoly(c)))+\mathcal{O}(poly(c)n^2)$$
In the final solution tree, a path from the root to a leaf is a maximal clique of a graph. Observe that the only time a path is added to the solution tree during the execution of Algorithm~\ref{alg2} is during the execution of the \textsc{DoubleScan} procedure. Therefore the number of maximal cliques of the input graph is equal to the sum of the number of maximal cliques of graphs $G_{v_i}$, with $i\in [n]$. This gives the following complexity, which concludes the proof :

$$T'(n,c)=\alpha \mathcal{O}(npoly(c))+\mathcal{O}(poly(c)n^2)$$

The additional $\mathcal{O}(p(n,c))$ comes from the computation of the induced $2$-paths.

\end{proof}

\subsection{Application to other graph parameters}
\label{conc}

The decomposition technique used to prove Theorem~\ref{fox}, which we previously discussed is  interesting on its own and can be applied to other graph parameters, for maximal clique enumeration. For example, we can naturally utilize the degeneracy of the graph. Recall that a graph has degeneracy $k$ if there is a vertex of degree at most $k$ in any induced subgraph. This gives rise to a degeneracy ordering, an ordering of the vertices of the graph, where any vertex has at most $k$ neighbors with higher degeneracy.
 This property can be applied within the context of the decomposition given by Fox et al. to enumerate all the maximal cliques in a graph with degeneracy $k$. We state the following conjecture.

\begin{conjecture}
    Algorithm~\ref{alg2} can be modified to enumerate all maximal cliques in a $k$-degenerate graph in time $\alpha \mathcal{O}(poly(k))$.
\end{conjecture}
The proposed algorithm would take advantage of the bounded number of neighbors with higher rank and incorporate modifications to the \textsc{DoubleScan} procedure, specifically in the way maximal cliques are enumerated. This approach aims to fully take advantage of the bounded size of the search space. Based on these observations, it is highly likely that the conjecture holds. It is worth noting that this complexity matches the bound presented by Manoussakis~\cite{G1}. In a more general sense, similar results are possible for any parameter that enables an ordering of vertices, allowing for the local enumeration of maximal cliques within the neighborhood of vertices with higher rank.

\bibliographystyle{plain}
\bibliography{bibli}

\begin{thebibliography}{10}

\bibitem{blo}
M.~Bloznelis and V.~Kurauskas.
\newblock Clustering function: a measure of social influence.
\newblock {\em arXiv preprint arXiv:1207.4941}, 2012.

\bibitem{30}
F.~Cazals and C.~Karande.
\newblock A note on the problem of reporting maximal cliques.
\newblock {\em Theoretical Computer Science}, 407(1-3):564--568, 2008.

\bibitem{27}
L.~Chang, J.~X. Yu, and L.~Qin.
\newblock Fast maximal cliques enumeration in sparse graphs.
\newblock {\em Algorithmica}, 66(1):173--186, 2013.

\bibitem{28}
C.~Comin and R.~Rizzi.
\newblock An improved upper bound on maximal clique listing via rectangular
  fast matrix multiplication.
\newblock {\em Algorithmica}, 80(12):3525--3562, 2018.

\bibitem{29}
A.~Conte, R.~Grossi, A.~Marino, and L.~Versari.
\newblock Sublinear-space bounded-delay enumeration for massive network
  analytics: Maximal cliques.
\newblock In {\em 43rd International Colloquium on Automata, Languages, and
  Programming (ICALP 2016)}, volume 148, pages 1--148, 2016.

\bibitem{4}
A.~T.~White D.~R.~Lick.
\newblock $d$-degenerate graphs.
\newblock {\em Canad. J. Math.}, 22:1082--1096, 1970.

\bibitem{fox2020}
J.~Fox, T.~Roughgarden, C.~Seshadhri, F.~Wei, and N.~Wein.
\newblock Finding cliques in social networks: A new distribution-free model.
\newblock {\em SIAM journal on computing}, 49(2):448--464, 2020.

\bibitem{24}
D.~S. Johnson, M.~Yannakakis, and C.~H. Papadimitriou.
\newblock On generating all maximal independent sets.
\newblock {\em Information Processing Letters}, 27(3):119 -- 123, 1988.

\bibitem{koana}
T.~Koana, C.~Komusiewicz, and F.~Sommer.
\newblock Computing dense and sparse subgraphs of weakly closed graphs.
\newblock In Yixin Cao, Siu{-}Wing Cheng, and Minming Li, editors, {\em 31st
  International Symposium on Algorithms and Computation, {ISAAC} 2020, December
  14-18, 2020, Hong Kong, China (Virtual Conference)}, volume 181 of {\em
  LIPIcs}, pages 20:1--20:17. Schloss Dagstuhl - Leibniz-Zentrum f{\"{u}}r
  Informatik, 2020.

\bibitem{26}
K.~Makino and T.~Uno.
\newblock New algorithms for enumerating all maximal cliques.
\newblock In {\em Scandinavian Workshop on Algorithm Theory}, pages 260--272.
  Springer, 2004.

\bibitem{G1}
G.~Manoussakis.
\newblock A new decomposition technique for maximal clique enumeration for
  sparse graphs.
\newblock {\em Theor. Comput. Sci.}, 770:25--33, 2019.

\bibitem{MM}
J.~W. Moon and L.~Moser.
\newblock On cliques in graphs.
\newblock {\em Israel journal of Mathematics}, 3(1):23--28, 1965.

\bibitem{8}
E.~Tomita, A.~Tanaka, and H.~Takahashi.
\newblock The worst-case time complexity for generating all maximal cliques and
  computational experiments.
\newblock {\em Theor. Comput. Sci.}, 363(1):28--42, October 2006.

\bibitem{25}
S.~Tsukiyama, M.~Ide, H.~Ariyoshi, and I.~Shirakawa.
\newblock A new algorithm for generating all the maximal independent sets.
\newblock {\em SIAM Journal on Computing}, 6(3):505--517, 1977.

\end{thebibliography}

\end{document}